\newtheorem{Theorem}{Theorem}[section]  
\newtheorem{Lemma}[Theorem]{Lemma}
\newtheorem{Corollary}[Theorem]{Corollary}
\newtheorem{Proposition}[Theorem]{Proposition}
\newtheorem{Definition}[Theorem]{Definition}
\newtheorem*{notation}{Notation}
\newcommand{\R}{{\mathbb R}}
\newcommand{\N}{{\mathbb N}}
\newcommand{\PP}{{\mathcal P}}
\def\cc#1{\mathcal{#1}}
\def\xto#1{\xrightarrow{#1}}
\def\xlongto#1{\stackrel{#1}{\longrightarrow}}
\newcommand{\sbt}{\,\begin{picture}(-1,1)(-1,-3)\circle*{3}\end{picture}\ }
\title{Walraswap: a solution to uniform price batch auctions}
\author{Sergio Andrés Yuhjtman, Flashbots}
\date{}
\begin{document}
\maketitle

\begin{abstract}
Consider a finite set of trade orders and automated market makers
(AMMs) at some state. We propose a solution to the problem of finding an equilibrium price vector to execute all the orders jointly with corresponding optimal AMMs swaps. The solution is based on Brouwer's fixed-point theorem. We discuss computational aspects relevant for realistic situations in public blockchain activity.
\end{abstract}

\section{Introduction}
Fixed-point methods lie at the heart of mathematical economics.
Equilibrium price vectors are expressed as fixed points of an
appropriate function. This can be traced back to von Neumann \cite{vonNeumann} and Arrow-Debreu \cite{AD}.
In ``The approximation of fixed points of a continuous mapping'' \cite{Scarf}, Herbert Scarf observed that the pure exchange case can be solved by means of Brouwer's fixed-point theorem, so there is no need to resort to the more general Kakutani's fixed-point theorem. As indicated by its title, the primary focus of that paper is an algorithm to compute Brouwer's fixed-points.

This article addresses a problem that is currently central in the context of public blockchains. Starting in 2019, decentralized exchange on public blockchains grew significantly (most notably on Ethereum) through the adoption of automated market makers (AMMs) \cite{Univ2},
\cite{Bal}, \cite{Curve}, \cite{Univ3}. This surge brought as a consequence a complex and interesting type of activity known as ``maximal extractable value'' (MEV) \cite{F2}.
Block proposers have the power to organize the transactions in their blocks at will. Since the effect of swaps operating on a given AMM depends on the ordering, this leads to a significant economic advantage for the block proposer. They may exploit this advantage either directly or by outsourcing the building of the block
\cite{MEV-boost}.
One of the answers to the MEV phenomenon was the design and adoption of applications that allow to execute trade orders in batches, possibly including AMM swaps in those batches \cite{Cow}, \cite{Unix}, \cite{Fusion}. The aim is to protect end users through a system that executes the trade orders in a commutative manner.
The fairness condition of {\it uniform price} is usually introduced
for batches. This means that there exists a vector of prices (the price for each asset) such that every order in the batch is executed at the relative price determined by that vector.
Assuming that all of the trade orders in a given set must be executed, we address the question of whether there is an optimal choice for the price vector and for the swaps at the available AMMs. In the affirmative case, is it possible to compute it?
These questions were previously considered in \cite{Walther} and \cite{Ramseyer5}. Our method is closely related to the algorithm proposed in \cite{Ramseyer5}, so we include a comparison between both in section \ref{comparison}.

If the AMMs are ignored, we can choose an equilibrium for the system given by the trade orders only. Ignoring the AMMs is intuitively suboptimal as suggested from the following simple examples for the case of two tokens and only one AMM available that operates this pair.
If there is a single trade order, nothing will happen if we decide to ignore the AMM. More generally, for a set of orders given by the market, we can expect to have often more demand from one side than from the other. In those cases the AMM's liquidity will provide the missing supply. Thus, making use of the AMMs seems to lead to larger volume and therefore larger social welfare in general.

Let us now summarize the main result.
Throughout the text, we will not consider the interests of AMMs' liquidity providers. We think of AMMs as passive machines. On the other hand, we will consider the utility from the point of view of the auctioneer. The auctioneer has a balance for each token starting at zero. There could be token surplus after the operation, coming from the execution of the trade orders and AMM swaps.
This surplus is kept by the auctioneer. We assume zero operational costs, often referred to as gas costs.

The chosen AMM swaps at our solution have to be such that there
does not exist another choice that increases the auctioneer's balance of every token, at least one of them strictly. This condition is known to be equivalent
to the existence of a price vector such that the final price at each AMM (given by the right derivative of the return function\footnote{This final price might be different from the initial price at the same AMM at its state after the swap.})
is in accordance with that price vector. This is because if two or more AMMs have discrepant final prices, they present an arbitrage opportunity, so we can slightly modify the amounts to obtain a strictly better outcome (see \cite{Ramseyer5}, proposition 3.5). From this observation, we are now considering the space of pairs of price vectors $(p, q)$, $p$ for executing orders, $q$ to determine the swaps. At this point the condition $p = q$ is appealing.
If $q \neq p$ then it would not be true that $p$ is the fair market price, since for some pair of tokens it would be possible to get a better price by using the AMMs. A similar and more rigorous statement is that for a given $p$, under the assumption that the utility of the tokens is linear and dictated by the vector $p$,\footnote{Since p is meant to be the market price vector, this utility function can be interpreted as a differential approximation for any agent's utility function.} the choice $q = p$ maximizes the utility for the auctioneer (theorem \ref{main} (a)).

The argument from the last paragraph is a strong motivation for the question: is it possible to find a vector $p$ such that all of the auctioneer balances are nonnegative after the operation? In other words, $p$ is such that the auctioneer does not provide any tokens when it executes the batch at prices $p$ and performs AMM swaps so that their final prices are in accordance with $p$. In sections \ref{model}, \ref{admissible-sec},
\ref{orders-as-admissible}, \ref{ammphis}, \ref{walrasian}
we develop the technical details to answer to this question affirmatively. The solution
is obtained as a walrasian equilibrium\cite{Walras} of a pure exchange system
constructed from the original trade orders plus virtual trade orders associated to the AMMs.

In section \ref{computation} we address the computational problem.


\begin{notation} $ $

\noindent $\sbt$ $n$ denotes a natural number, $n \ge 2$.

\noindent $\sbt$ The set of indices is $I_n = \{i \in \N / 1 \le i \le n\}$.

\noindent $\sbt$ The set of prices is $\PP_n = (\R_{>0})^n$.
\end{notation}

\section{Brouwer's fixed-point setting} \label{model}

This section introduces the theoretical framework needed to apply Brouwer's
fixed point theorem to our case.

Trade orders express amounts to trade as a function of the prices.
We will use supply functions $\PP_n \xto \varphi \R^n$ that take positive
values at coordinate $i$ when the agent is selling the $i$-th token
and negative values when buying. The absolute value $|\varphi_i(p)|$
indicates the amount to be sold or bought if the prices
are $p$. They satisfy two basic conditions:
$\varphi(\lambda p) = \varphi(p) \ \forall \lambda > 0, \forall p \in \PP_n$
meaning that only relative prices are meaningful, and $p \cdot \varphi(p) = 0$
which means that the net value of any trade must be zero (the tokens bought must exactly compensate for the tokens sold). This condition
may be called Walras's law.

This is essentially the same as the ``example from economics'' in \cite{Scarf}. However our supply functions are not defined on the border,
i.e. at price vectors $p$ where $p_i = 0$ for some $i$. For this reason we
need to introduce
conditions of behaviour at the border that hold in our case and still allow to apply the theorem.
These are conditions (2) and (3) in definition \ref{supply}.

\medskip

\newpage
\begin{Definition} \label{supply} $ $
 We call $\PP_n \xlongto{\varphi} \R^n$ an ``admissible supply function'',
 or simply a ``supply function'' if

(0) $\varphi(\lambda p) = \varphi(p)$ $\forall \lambda > 0$, $\forall p \in \PP_n$, and $\varphi$ is continuous.

(1) $\varphi(p) \cdot p = 0 \ \forall p \in \PP_n$

(2) $\varphi_i$ is bounded above $\forall i$.

(3) The functions $\psi_i(p) = \varphi_i(p)p_i$ extend continuosly to
$(\R_{\ge 0})^n \setminus 0 \xto{\psi_i} \R$.
\end{Definition}

Observe that condition (1), Walras's law, can be written in terms of $\psi$ as $\sum_{i=1}^n \psi_i = 0$. While $\varphi_i$ expresses the supply amount of token $i$, $\psi_i$ expresses the supply value.

\begin{Proposition} \label{neg-psi}
Following the notation of definition \ref{supply}, we have $\psi_i(p) \le 0$ if $p_i = 0$ for any admissible supply function $\varphi$ and any index $i \in I_n$.
\end{Proposition}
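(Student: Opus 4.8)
The plan is to leverage condition (2) to control $\psi_i$ as we approach the boundary. The key observation is that on the interior $\PP_n$ we have the pointwise bound $\psi_i(p) = \varphi_i(p)\, p_i \le M_i\, p_i$, where $M_i$ is an upper bound for $\varphi_i$ furnished by condition (2). This step uses that $p_i > 0$ on $\PP_n$, so multiplying the inequality $\varphi_i(p) \le M_i$ by $p_i$ preserves its direction.

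Next I would fix a boundary point $p^0 \in (\R_{\ge 0})^n \setminus 0$ with $p^0_i = 0$ and choose a sequence $p^{(k)} \in \PP_n$ with $p^{(k)} \to p^0$; such a sequence exists because $\PP_n$ is dense in $(\R_{\ge 0})^n \setminus 0$ (for instance $p^{(k)} = p^0 + \tfrac{1}{k}\,\mathbf{1}$ for large $k$, which has strictly positive coordinates and hence lies in $\PP_n$). Applying the interior bound gives $\psi_i(p^{(k)}) \le M_i\, p^{(k)}_i$ for every $k$.

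The final step is to pass to the limit. By condition (3) the extended $\psi_i$ is continuous on $(\R_{\ge 0})^n \setminus 0$, so $\psi_i(p^{(k)}) \to \psi_i(p^0)$; meanwhile $M_i\, p^{(k)}_i \to M_i\, p^0_i = 0$ since $p^{(k)}_i \to p^0_i = 0$. Taking limits in the inequality yields $\psi_i(p^0) \le 0$, as claimed. Equivalently, one can avoid sequences entirely: the inequality $\psi_i \le M_i p_i$ between two functions that are both continuous on $(\R_{\ge 0})^n \setminus 0$ holds throughout the dense subset $\PP_n$, hence extends to the whole set by continuity, and at $p^0$ the right-hand side vanishes because $p^0_i = 0$.

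There is essentially no deep obstacle here; the only point requiring care is that condition (2) provides boundedness \emph{above} rather than a two-sided bound. This one-sidedness is exactly what is needed, and it matches the one-sided conclusion $\psi_i \le 0$: had we assumed a lower bound instead, the same argument would give $\psi_i \ge 0$. So the sign asymmetry in the hypothesis is precisely mirrored by the sign asymmetry in the statement.
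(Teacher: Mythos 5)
Your proof is correct and follows essentially the same route as the paper's: bound $\psi_i(q) = \varphi_i(q)\,q_i \le M_i\,q_i$ on the interior using condition (2), then let $q \to p$ and invoke the continuous extension from condition (3). The paper's version is just a three-line compression of exactly this argument.
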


\begin{proof}
  By condition (2), $\varphi_i \le M$ for some constant $M$. Therefore, $\varphi_i(q) q_i \le M q_i$.
  Taking the limit $q \to p$ we obtain $\psi_i(p) \le 0$.
\end{proof}

\begin{Definition}
 An admissible supply function $\varphi$ is said to be ``strict'' if
 it satisfies $\psi_i(p) < 0$ for $p_i = 0$, for every $i \in I_n$.
 The maps $\psi_i$ are those defined in definition \ref{supply}.
\end{Definition}

In words, a strict supply function is a supply function such that the demanded
value (not the demanded amount) for a token tends to a positive number when its price goes to zero.

\begin{Theorem} \label{equilibrium}
Let $\PP_n \xlongto \varphi \R^n$ be a strict supply function.
Then there is a $p \in \PP_n$ such that $\varphi(p) = 0$.
\end{Theorem}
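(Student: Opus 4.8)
The plan is to apply Brouwer's fixed-point theorem on the standard simplex and to work with $\psi$ rather than $\varphi$, since each $\psi_i$ is continuous up to the boundary (condition (3)) whereas $\varphi$ is not. First I would restrict attention to the simplex $\Delta = \{p \in (\R_{\ge 0})^n : \sum_i p_i = 1\}$, which is compact, convex, and---being disjoint from the origin---contained in the domain of every continuous extension $\psi_i$; normalizing prices this way is harmless because $\varphi$ is scale-invariant (condition (0)). A zero of $\varphi$ will be produced as an interior point of $\Delta$, and at such a point (all $p_i > 0$) one has $\varphi_i(p) = 0$ for all $i$ precisely when $\psi_i(p) = \varphi_i(p)p_i = 0$ for all $i$; so the goal becomes to find $p^* \in \Delta$ with $\psi(p^*) = 0$ and to check that $p^*$ is interior.

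Next I would introduce the price-adjustment map $f : \Delta \to \Delta$ that raises the price of every token in excess demand (i.e. with $\psi_i < 0$):
\[
  f_i(p) = \frac{p_i + \max(0, -\psi_i(p))}{1 + \sum_{j=1}^n \max(0, -\psi_j(p))}.
\]
This is well defined and continuous on $\Delta$ (the denominator is $\ge 1$ and $\psi$ is continuous there), and $f(\Delta) \subseteq \Delta$ since its components are nonnegative and sum to $1$. Brouwer's fixed-point theorem then yields $p^* = f(p^*)$, which, writing $S = \sum_j \max(0, -\psi_j(p^*))$, is equivalent to the system $p_i^* S = \max(0, -\psi_i(p^*))$ for every $i$.

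It remains to analyze this system via Walras's law in the form $\sum_i \psi_i(p^*) = 0$. If $S = 0$ then $\max(0, -\psi_i(p^*)) = 0$, so $\psi_i(p^*) \ge 0$ for all $i$; combined with $\sum_i \psi_i(p^*) = 0$ this forces $\psi_i(p^*) = 0$ for all $i$. If instead $S > 0$, I would derive a contradiction: for any index with $p_i^* = 0$ the equation gives $\max(0, -\psi_i(p^*)) = 0$, i.e. $\psi_i(p^*) \ge 0$, contradicting strictness; hence $p^*$ must be interior, and then each equation reads $-\psi_i(p^*) = p_i^* S > 0$, whence $\sum_i \psi_i(p^*) = -S \sum_i p_i^* = -S < 0$, again contradicting Walras's law. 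Thus $S = 0$ and $\psi(p^*) = 0$.

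Finally, strictness also delivers the interiority of the solution: if some $p_i^* = 0$ then $\psi_i(p^*) < 0 \ne 0$, which is impossible. Hence $p^* \in \PP_n$ and $\varphi_i(p^*) = \psi_i(p^*)/p_i^* = 0$ for all $i$, proving the theorem. The crux---and the place where the hypotheses must be spent carefully---is the boundary: because $\varphi$ is undefined where some $p_i = 0$, the entire argument has to be carried by the continuous extension $\psi$, and it is \emph{strictness} (rather than the weaker Proposition \ref{neg-psi}) that is precisely what guarantees the Brouwer fixed point lands in the interior, where a zero of $\psi$ is genuinely a zero of $\varphi$.
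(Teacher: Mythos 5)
Your proof is correct, and it shares the paper's overall architecture --- Brouwer's theorem on the simplex $\Delta$, working throughout with the boundary-continuous value functions $\psi_i$ rather than $\varphi_i$, and using strictness to push the fixed point into the interior --- but it runs on a genuinely different auxiliary self-map, namely the classical additive Walras--Scarf price-adjustment map (essentially the alternative that the paper's Remark 1 after Theorem \ref{equilibrium} alludes to), transplanted from quantities to the value functions $\psi_i$. The paper instead uses the multiplicative map $\rho_i(q) = q_i\bigl(1 - \tfrac{1}{M_i}\varphi_i(q/M)\bigr)$, whose nonnegativity is bought with the upper bounds $M_i$ from condition (2) of Definition \ref{supply}; the payoff is that its fixed-point equation collapses in one line to $\psi(p) = 0$ with no case analysis, and the induced condition $\rho_i(q) \ge q_i \Leftrightarrow \varphi_i(p) \le 0$ is the labeling the paper later wants for the approximation algorithm. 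Your map gets nonnegativity for free from the $\max(0,\cdot)$, so you never use condition (2) or the bounds $M_i$ at all --- your argument is slightly more economical in hypotheses --- but you pay with the two-case analysis on $S$, and in the case $S = 0$ you invoke Walras's law at $p^*$, which at that stage of the argument may still lie on the boundary of $\Delta$, where condition (1) does not literally apply. That step deserves one explicit sentence: $\sum_i \psi_i = 0$ holds on $\PP_n$ by condition (1), and since each $\psi_i$ extends continuously to $(\R_{\ge 0})^n \setminus 0$ by condition (3), the identity persists on all of $\Delta$ by density. With that line added, your proof is complete and fully rigorous.
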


\begin{proof}
Let
$$\Delta = \left\{ x \in \R^n / x_i \ge 0, \ \sum_{i=1}^n x_i = 1 \right\}$$ be the $(n-1)$-simplex, and $\stackrel{\circ}{\Delta} = \{x \in \Delta / x_i > 0 \ \forall i\}$. Consider the function $\stackrel{\circ}{\Delta} \xto \rho \R^n$ defined by
$$\rho_i(q) = q_i \left(1 - \frac{1}{M_i}\varphi_i (q/M) \right)$$
where for each index $i$, $M_i > 0$ is an upper bound for $\varphi_i$, and
$q/M$ denotes the vector whose $i$-coordinate is $q_i / M_i$.

The function $\rho$ has the following properties:

\noindent (a) $\rho_i(q) \ge 0$ $\forall q \in \stackrel{\circ}{\Delta}$.

\noindent (b) $\sum_{i=i}^n \rho_i(q) = 1$.

\noindent (c) $\rho_i(q)$ extends continuously to $\Delta$.

Property (a) holds because $M_i$ is an upper bound for $\varphi_i$,
(b) follows from Walras's law, \mbox{$\varphi(p) \cdot p = 0$} while (c) follows from the extension property, (3) in \ref{supply}.

From properties (a), (b), (c) we conclude that $\rho$ is a continuous function
$$\Delta \xlongto \rho \Delta$$
By Brouwer fixed-point theorem, we have a $\tilde q \in \Delta$ such that
$\rho(\tilde q) = \tilde q$. Let us call $p = \tilde q / M$. After simplification,
we have $\psi(p) = 0$. Since $\varphi$ is strict,
we conclude $p \in \cc P_n$ and $\varphi(p) = 0$.
\end{proof}

\noindent {\bf Remark 1.}
  The auxiliary function $\rho$ used in the proof above is different from the one used by Scarf in \cite{Scarf}. Both choices work equally well. We opted for $\rho$ in order to show this alternative, and also because the condition $\rho_i(q) \ge q_i$ takes the nicer form
  $\varphi_i(p) \le 0$. This condition is a basic building block in the approximation algorithm.

\noindent {\bf Remark 2.} An example of a supply function that has no roots is, for n = 2
  $$\varphi(p_1, p_2) = \left(-\frac{p_2}{p_1}, 1 \right)$$
In this case $\psi(1, 0) = (0, 0)$.

\begin{Proposition} \label{linearity}
The sum of two supply functions is a supply function. If one of them
is strict, then the sum is strict.
\end{Proposition}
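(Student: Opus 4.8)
The plan is to verify that the sum $\sigma = \varphi + \chi$ of two admissible supply functions satisfies conditions (0)--(3) of definition \ref{supply} one at a time, observing that each condition is stable under addition, and then to handle the strictness claim separately using Proposition \ref{neg-psi}. The proposition is really a bookkeeping exercise, so I would keep each verification short and reserve genuine attention only for condition (3) and for the strictness step.

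First I would dispatch the algebraic conditions. Condition (0) holds because a sum of degree-zero homogeneous continuous maps is again degree-zero homogeneous and continuous. Condition (1) is immediate from bilinearity of the dot product, since $\sigma(p) \cdot p = \varphi(p) \cdot p + \chi(p) \cdot p = 0$. Condition (2) follows because if $\varphi_i \le M_i$ and $\chi_i \le M_i'$ then $\sigma_i \le M_i + M_i'$, so $\sigma_i$ is again bounded above.

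The one condition that deserves a moment's care is (3). Here I would write $\psi_i^{\sigma}(p) = \sigma_i(p) p_i = \varphi_i(p) p_i + \chi_i(p) p_i = \psi_i^{\varphi}(p) + \psi_i^{\chi}(p)$ on $\PP_n$. By hypothesis each of $\psi_i^{\varphi}$ and $\psi_i^{\chi}$ extends continuously to $(\R_{\ge 0})^n \setminus 0$; the sum of these two extensions is a continuous function on that domain that agrees with $\psi_i^{\sigma}$ on the dense subset $\PP_n = (\R_{>0})^n$, hence is the sought continuous extension of $\psi_i^{\sigma}$. This establishes that $\sigma$ is an admissible supply function.

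Finally, for the strictness claim, suppose $\varphi$ is strict and fix any $p$ with $p_i = 0$. Then $\psi_i^{\varphi}(p) < 0$ by the definition of strictness, while Proposition \ref{neg-psi} applied to the admissible summand $\chi$ gives $\psi_i^{\chi}(p) \le 0$. Adding these, $\psi_i^{\sigma}(p) = \psi_i^{\varphi}(p) + \psi_i^{\chi}(p) < 0$, which is exactly the strictness condition for $\sigma$. There is no substantial obstacle in this proposition; the only place an external input is needed is this last step, where Proposition \ref{neg-psi} is precisely what guarantees that the non-strict summand cannot cancel the strict negativity contributed by $\varphi$ at the boundary.
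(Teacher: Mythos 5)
Your proof is correct and follows exactly the route the paper intends: the paper's own proof simply states the verification is straightforward and that the strictness part uses Proposition \ref{neg-psi}, which is precisely the one step where you correctly invoke it to ensure the non-strict summand cannot cancel the boundary negativity. Your write-up just fills in the routine details the paper omits.
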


\begin{proof}
 It is straightforward. The strictness part makes use of proposition \ref{neg-psi}
\end{proof}

In some cases admissible supply functions do not depend on one or more price coordinates.
It will be useful to establish a definition and some simple facts
about this.

\begin{Definition}
  Let $\PP_n \xto \varphi \R^n$ be an admissible supply function, and
  $I \subset I_n$. We say that $\varphi$ is supported at $I$ if
  $\varphi(p)$ does not depend on $p_i$ for every $i \notin I$.
\end{Definition}

\begin{Proposition} \label{support}
 An admissible supply function $\PP_n \xto \varphi \R^n$ supported
 at $I \subset I_n$ satisfies $\varphi_i(p) = 0$ for every $i \notin I$.

 Moreover, if $I = \{i_1,...,i_m\}$, the restricted function
 $\PP_m \xto {\tilde \varphi} \R^m$, defined by
 $\tilde \varphi(p) = \pi \varphi(\hat p)$, where $\hat p$ is any vector
 such that $\hat p_{i_j} = p_j$, and $\pi(x)_j = x_{i_j}$, is admissible.
\end{Proposition}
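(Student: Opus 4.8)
The plan is to treat the two assertions separately, and I expect the real work to lie in verifying condition (3) for the restricted function.

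For the first assertion I would fix an index $i \notin I$ together with a price vector $p \in \PP_n$, and then vary \emph{only} the coordinate $p_i$, holding all others fixed. Because $\varphi$ is supported at $I$, the whole vector $\varphi(p)$ — in particular every component $\varphi_j(p)$ — is independent of $p_i$. Walras's law (condition (1)) reads
$$\varphi_i(p)\,p_i + \sum_{j \neq i} \varphi_j(p)\,p_j = 0.$$
Regarded as a function of the single variable $p_i \in (0,\infty)$, this is an affine function whose slope is $\varphi_i(p)$ and whose constant term $\sum_{j \neq i}\varphi_j(p)\,p_j$ does not depend on $p_i$. An affine function vanishing identically on $(0,\infty)$ must have zero slope, so $\varphi_i(p) = 0$. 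As $i \notin I$ and $p$ were arbitrary, this proves $\varphi_i \equiv 0$ for every $i \notin I$.

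For the second assertion I would first record that $\tilde\varphi$ is well defined: since $\varphi(\hat p)$ does not depend on the coordinates $\hat p_i$ with $i \notin I$, the value $\pi\varphi(\hat p)$ is independent of the choice of $\hat p$. To streamline the verification I would fix the explicit section $\sigma : (\R_{\ge 0})^m \setminus 0 \to (\R_{\ge 0})^n \setminus 0$ given by $\sigma(p)_{i_j} = p_j$ and $\sigma(p)_i = 1$ for $i \notin I$; its off-$I$ coordinates equal $1$, so $\sigma(p)$ is never zero and always lands in the stated domain, and moreover $\sigma(\PP_m) \subset \PP_n$. Then $\tilde\varphi(p) = \pi\varphi(\sigma(p))$. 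Conditions (0) and (2) are immediate: continuity and homogeneity are inherited from those of $\varphi$ (using also that $\varphi$ ignores the off-$I$ coordinates) together with the linearity of $\pi$, while $\tilde\varphi_j(p) = \varphi_{i_j}(\hat p)$ is bounded above because $\varphi_{i_j}$ is. For Walras's law I would compute
$$\tilde\varphi(p)\cdot p = \sum_{j=1}^m \varphi_{i_j}(\hat p)\,\hat p_{i_j} = \sum_{i \in I}\varphi_i(\hat p)\,\hat p_i,$$
and then invoke the first assertion: the missing terms $\sum_{i \notin I}\varphi_i(\hat p)\,\hat p_i$ all vanish, so this equals $\varphi(\hat p)\cdot\hat p = 0$.

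The delicate point, and the main obstacle, is condition (3): the continuous extension of $\tilde\psi_j(p) = \tilde\varphi_j(p)\,p_j$ to $(\R_{\ge 0})^m \setminus 0$. Here I would again exploit $\sigma$. On $\PP_m$ one has $\tilde\psi_j(p) = \varphi_{i_j}(\sigma(p))\,\sigma(p)_{i_j} = \psi_{i_j}(\sigma(p))$, where $\psi_{i_j}$ is the map of Definition \ref{supply} attached to $\varphi$. Since $\psi_{i_j}$ extends continuously to $(\R_{\ge 0})^n \setminus 0$ and $\sigma$ is continuous with image inside that set, the composition $\psi_{i_j} \circ \sigma$ is a continuous extension of $\tilde\psi_j$. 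The subtlety is that one must evaluate the boundary behaviour of $\varphi$ while freezing the discarded coordinates, and one must be sure this does not affect the limiting value; pinning the off-$I$ coordinates at $1$ through $\sigma$ keeps the argument always away from the faces of those variables and reduces everything to the extension property of $\varphi$ already assumed in Definition \ref{supply}.
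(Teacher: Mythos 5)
Your proof is correct and follows essentially the same route as the paper: viewing Walras's law as an affine function of $p_i$ with slope $\varphi_i(p)$ is just a repackaging of the paper's argument comparing two price vectors that differ only in the $i$-th coordinate. For the second assertion the paper only remarks that the verification is a sequence of trivial checks; your write-up carries out those checks correctly, including the useful device of the section $\sigma$ that pins the discarded coordinates at $1$, which is exactly what is needed to settle conditions (0) and (3).
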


\begin{proof}
 Let $i \notin I$. Take $p, q \in \PP_n$ that only differ at the $i$-th coordinate.
 By hypothesis we have $\varphi(p) = \varphi(q)$.
 Then we have
 $$p \cdot \varphi(p) = q \cdot \varphi(p) = 0$$
 $$p_i \varphi_i(p) = q_i \varphi_i(p)$$
 $$\varphi_i(p) = 0$$

 The proof of the second assertion is just a sequence of trivial checks.
\end{proof}

Conversely, we can extend admissible functions by adding trivial variables.

\begin{Lemma} \label{extend}
  Let $m \leq n$, $\PP_m \xto \varphi \R^m$ be an admissible supply
  function and $I_m \stackrel{\sigma}{\hookrightarrow} I_n$ injective.
Denote $p_\sigma = (p_{\sigma(1)},...,p_{\sigma(m)})$.
Then $\PP_n \xto {\bar \varphi} \R^n$ defined by $\bar \varphi_i(p) = 0$ if $i \notin \sigma(I_m)$
and $\bar \varphi_i(p) = \varphi_i(p_\sigma)$ if $i \in \sigma(I_m)$, is admissible and supported at $\sigma(I_m)$.
\end{Lemma}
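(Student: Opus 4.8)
The plan is to verify, one by one, the defining properties of Definition \ref{supply} for $\bar\varphi$ together with the support claim, in each case reducing the property to the one already known for $\varphi$ via the linear projection $p \mapsto p_\sigma$. Throughout I write $\psi_j(p) = \varphi_j(p)\,p_j$ for the value functions attached to $\varphi$ on $\PP_m$, and $\bar\psi_i(p) = \bar\varphi_i(p)\,p_i$ for those attached to $\bar\varphi$ on $\PP_n$.

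The support statement and conditions (0), (1), (2) are immediate. Since $\bar\varphi_{\sigma(j)}(p) = \varphi_j(p_\sigma)$ while $\bar\varphi_i(p) = 0$ for $i \notin \sigma(I_m)$, the map $\bar\varphi$ depends only on the coordinates indexed by $\sigma(I_m)$, which is precisely the assertion that it is supported at $\sigma(I_m)$; continuity holds because $p \mapsto p_\sigma$ is continuous and $\varphi$ is continuous, and homogeneity follows from $(\lambda p)_\sigma = \lambda\, p_\sigma$ together with the homogeneity of $\varphi$. For condition (1), only the indices in $\sigma(I_m)$ contribute to $p \cdot \bar\varphi(p)$, and that partial sum equals $p_\sigma \cdot \varphi(p_\sigma) = 0$ by Walras's law for $\varphi$. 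For (2), each $\bar\varphi_i$ is either identically $0$ or equals $\varphi_j(p_\sigma) \le M_j$.

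The only delicate point is condition (3), the continuous extension of $\bar\psi_i$ to $(\R_{\ge 0})^n \setminus 0$. For $i \notin \sigma(I_m)$ we have $\bar\psi_i \equiv 0$, so assume $i = \sigma(j)$. Then $\bar\psi_{\sigma(j)}(p) = p_{\sigma(j)}\,\varphi_j(p_\sigma) = \psi_j(p_\sigma)$, so $\bar\psi_{\sigma(j)}$ is the composition of $\psi_j$ with the continuous projection $p \mapsto p_\sigma$. The subtlety, which I expect to be the main obstacle, is that this projection can send a nonzero $p \in (\R_{\ge 0})^n \setminus 0$ to $p_\sigma = 0$ (namely when all selected coordinates of $p$ vanish), and at the origin of $\R^m$ condition (3) for $\varphi$ says nothing. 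Hence the naive composition argument does not close there.

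To overcome this I would first show that $\psi_j$ extends continuously to the whole orthant $(\R_{\ge 0})^m$ with value $\psi_j(0) = 0$. The key observation is that $\psi_j$ is positively homogeneous of degree one: from $\varphi(\lambda p) = \varphi(p)$ one gets $\psi_j(\lambda p) = \lambda\, \psi_j(p)$ on $\PP_m$, and this passes to $(\R_{\ge 0})^m \setminus 0$ by the continuity granted by (3). Restricting $\psi_j$ to the compact simplex $\{x \ge 0,\ \sum_k x_k = 1\}$ gives a uniform bound $|\psi_j| \le C$, and homogeneity upgrades this to $|\psi_j(p)| \le C \sum_k p_k$ for every $p \in (\R_{\ge 0})^m \setminus 0$. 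Thus $\psi_j(p) \to 0$ as $p \to 0$, so the extension by $\psi_j(0) = 0$ is continuous on all of $(\R_{\ge 0})^m$. Composing this extension with the continuous map $p \mapsto p_\sigma$ then produces a continuous extension of $\bar\psi_{\sigma(j)}$ to $(\R_{\ge 0})^n \setminus 0$, which establishes (3) and completes the verification that $\bar\varphi$ is admissible and supported at $\sigma(I_m)$.
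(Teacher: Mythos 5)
Your proof is correct, and it is worth noting that the paper itself states Lemma \ref{extend} without any proof at all (it is presented as a routine converse to Proposition \ref{support}, whose own second assertion is dismissed as ``a sequence of trivial checks''). Your verification of the support claim and of conditions (0), (1), (2) matches what the author evidently had in mind. The added value of your write-up is the treatment of condition (3): you correctly identify that when $m < n$ the projection $p \mapsto p_\sigma$ can send a nonzero point of $(\R_{\ge 0})^n \setminus 0$ to the origin of $\R^m$, where the extension granted by condition (3) for $\varphi$ says nothing, so the naive composition argument genuinely does not close. Your fix --- observing that $\psi_j$ is positively homogeneous of degree one, that homogeneity passes to the boundary extension by continuity, and that boundedness on the unit simplex then yields $|\psi_j(p)| \le C \sum_k p_k$, forcing the continuous extension $\psi_j(0) = 0$ --- is sound, and it is exactly the ingredient needed to make the lemma true as stated rather than merely plausible. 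In short: where the paper asserts, you prove, and the one point that actually requires an idea is handled correctly.
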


\section{Construction of admissible supply functions} \label{admissible-sec}

The main result \ref{main} and \ref{coro} makes use of theorem \ref{equilibrium} from previous section. The corresponding supply function $\varphi$ will be constructed as a sum of supply functions. Each of these will be either a trade order, or the supply function associated to an AMM defined at section \ref{ammphis}.
Lemma \ref{construct} allows to construct the supply functions for both cases.
It is a particular case of proposition \ref{characterization}, which
characterizes admissible supply functions for
two tokens in terms of $g$, the supply of the first token as a function of the exchange rate.

\begin{Proposition} \label{characterization}
Let $\cc P_2 \xto \varphi \R^2$ be an admissible supply function. Then there is a
continuous $\R_{>0} \xto g \R$ satisfying
\begin{equation} \label{def-varphi}
    \begin{cases}
      \varphi_1(p) = g(p_1/p_2)\\
      \varphi_2(p) = -\frac{p_1}{p_2}g(p_1/p_2)
    \end{cases}
\end{equation}
and such that $g$ is bounded above, $\lim_{\infty} g \in \R_{\ge 0}$,
$rg(r)$ is bounded below and $\lim_{r \to 0} rg(r) \in \R_{\le 0}$.
Conversely, given any $g$ with those properties, equation (\ref{def-varphi})
defines an admissible supply function for two tokens.
\end{Proposition}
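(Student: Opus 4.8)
The plan is to use homogeneity to collapse the two-variable supply function into a one-variable function of the exchange rate $r = p_1/p_2$, and then to translate each of the four analytic conditions on $g$ into (and out of) the defining properties of definition \ref{supply}. First I would treat the forward direction. By condition (0), $\varphi$ is invariant under positive scaling, so taking $\lambda = 1/p_2$ gives $\varphi(p_1,p_2) = \varphi(p_1/p_2, 1)$. This suggests defining $g(r) = \varphi_1(r,1)$, which is continuous as a composition of continuous maps and immediately yields $\varphi_1(p) = g(p_1/p_2)$. Walras's law (condition (1)), $p_1\varphi_1(p) + p_2\varphi_2(p) = 0$, then forces $\varphi_2(p) = -\frac{p_1}{p_2}g(p_1/p_2)$, establishing (\ref{def-varphi}).

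Next I would read off the four properties of $g$. That $g$ is bounded above is immediate from condition (2) applied to $\varphi_1$. For $rg(r)$ bounded below, I observe that $\varphi_2$, viewed as a function of $r$, equals $-rg(r)$; since $\varphi_2$ is bounded above by condition (2), $rg(r)$ is bounded below. The two limits require condition (3) together with proposition \ref{neg-psi}. Evaluating $\psi_1(r,1) = rg(r)$ and letting $r\to 0$ corresponds to approaching the boundary point $(0,1)$, so the continuous extension forces $\lim_{r\to 0} rg(r) = \psi_1(0,1)$ to exist, and proposition \ref{neg-psi} gives $\psi_1(0,1)\le 0$. Symmetrically, $\psi_2(1,t) = -g(1/t)$ and letting $t\to 0^+$ approaches $(1,0)$, so $\lim_{\infty} g = -\psi_2(1,0)$ exists and is $\ge 0$, again by proposition \ref{neg-psi}.

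For the converse, given $g$ with the stated properties I would define $\varphi$ by (\ref{def-varphi}) and verify the four clauses of definition \ref{supply}. Homogeneity (0) holds because $p_1/p_2$ is scale invariant, and continuity holds since $p_2 > 0$ on $\cc P_2$; Walras's law (1) is a one-line cancellation; and boundedness above (2) for $\varphi_1$ and $\varphi_2$ mirrors the hypotheses on $g$ and $rg(r)$ exactly as in the forward direction. The real content is condition (3). Here I would note that Walras's law gives $\psi_2 = -\psi_1$, so only $\psi_1(p) = g(p_1/p_2)\,p_1$ needs a continuous extension to the two boundary rays of $(\R_{\ge 0})^2\setminus 0$.

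The main obstacle, and the step deserving the most care, is exactly this boundary analysis, because $g$ may blow up as $r\to 0$ and the naive expression $\psi_1 = g(r)\,p_1$ degenerates into an indeterminate $0\cdot\infty$ form there. The resolution is to use two equivalent expressions adapted to each ray: near a point $(a,0)$ with $a>0$ one has $r\to\infty$, so $\psi_1 = g(r)\,p_1$ together with the finite $\lim_{\infty} g$ gives the limit $a\lim_{\infty} g$; near a point $(0,b)$ with $b>0$ one has $r\to 0$, so rewriting $\psi_1 = rg(r)\,p_2$ and using the finite $\lim_{r\to 0} rg(r)$ gives the limit $b\lim_{r\to 0}rg(r)$. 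Since each of these limits is path independent and $\psi_1$ is homogeneous of degree one, they assemble into a genuine continuous extension on all of $(\R_{\ge 0})^2\setminus 0$, and with $\psi_2 = -\psi_1$ this completes condition (3).
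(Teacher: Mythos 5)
Your proposal is correct and follows essentially the same route as the paper: defining $g(r) = \varphi_1(r,1)$, deriving $\varphi_2$ from Walras's law, reading boundedness off condition (2), and obtaining the two limits from the boundary extension of $\psi$ in condition (3) together with proposition \ref{neg-psi}. You additionally spell out the sign arguments and the converse's boundary analysis (the $0\cdot\infty$ resolution via the two expressions $\psi_1 = g(r)p_1$ and $\psi_1 = rg(r)p_2$), which the paper compresses into ``the proof of the converse is straightforward''; these details are accurate and consistent with the paper's argument.
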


\begin{proof}
 Given $\varphi$, define $\R_{>0} \xto g \R$ by $g(r) = \varphi_1(r,1)$. Then
 $g$ is continuous and bounded above.
 $$\varphi_1(p_1, p_2) = \varphi_1(p_1/p_2, 1) = g(p_1/p_2)$$
 and the expression for $\varphi_2$ follows by applying Walras's law.
 We have $rg(r) = -\varphi_2(r, 1)$, which is bounded below.
 For $p_1, p_2 \neq 0$ we have $\psi_1(p) = p_1 g(p_1/p_2)$.
 If $s \neq 0$ we can write
 $$\psi_1(s, 0) = \lim_{r \to 0} \psi_1(s, r) = \lim_{r \to 0} s g(s/r) = s \lim_{r \to \infty} g(r)$$

 $$\psi_1(0, s) = \lim_{r \to 0} \psi_1(r, s) = \lim_{r \to 0} r g(r/s) = s \lim_{r \to 0} rg(r)$$
 proving the existence of the limits. From these relations, the proof of the converse is straightforward.
\end{proof}

\begin{Lemma} \label{construct}
Let $\R_{\ge 0} \xto g \R_{\ge 0}$ be a continuous function such that
$\underset{\infty}{\lim} \ g \in \R_{\ge 0}$.
Then the function $\PP_2 \xto \varphi \R^2$ determined by
equation (\ref{def-varphi}) is an admissible supply function.
\end{Lemma}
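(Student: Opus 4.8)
The plan is to deduce the lemma directly from Proposition \ref{characterization}. That proposition's converse direction guarantees that equation (\ref{def-varphi}) yields an admissible supply function whenever $g$ (viewed on $\R_{>0}$) satisfies four conditions: $g$ is bounded above, $\lim_{\infty} g \in \R_{\ge 0}$, $rg(r)$ is bounded below, and $\lim_{r \to 0} rg(r) \in \R_{\le 0}$. So the entire task reduces to showing that the hypotheses here---$g$ continuous on $\R_{\ge 0}$, nonnegative, with a finite nonnegative limit at infinity---imply those four conditions.

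Three of the conditions fall out almost immediately. The hypothesis $\lim_{\infty} g \in \R_{\ge 0}$ is assumed verbatim. Since $r \ge 0$ and $g(r) \ge 0$, the product $rg(r)$ is nonnegative, hence bounded below by $0$. For the behaviour at the origin I would use that $g$ is defined and continuous at $0$ in this lemma, so $g(0)$ is a finite number and $\lim_{r \to 0} rg(r) = 0 \cdot g(0) = 0 \in \R_{\le 0}$; alternatively, once boundedness of $g$ is known, $rg(r) \to 0$ follows from $|rg(r)| \le r \sup g$.

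The one condition worth a short argument is that $g$ is bounded above, where I would split the domain. Writing $L = \lim_{\infty} g$, there is some $R > 0$ with $g(r) \le L + 1$ for all $r \ge R$; on the compact interval $[0, R]$ the continuous function $g$ attains a maximum. Taking the larger of the two bounds dominates $g$ on all of $\R_{\ge 0}$. With all four conditions verified, I would invoke Proposition \ref{characterization} to conclude that $\varphi$ is admissible. I anticipate no real obstacle: the statement is a specialization of the proposition to nonnegative $g$, and nonnegativity is exactly what trivializes the two conditions involving the factor $r$; only the boundedness-above step genuinely uses the existence and finiteness of the limit at infinity rather than continuity alone.
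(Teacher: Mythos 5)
Your proposal is correct and follows exactly the paper's route: the paper's own proof is the one-line observation that the lemma is a particular case of the converse part of Proposition \ref{characterization}, and you simply spell out the verification of the four hypotheses (boundedness above via the finite limit at infinity plus compactness, and the two conditions on $rg(r)$ trivialized by nonnegativity of $g$), all of which is accurate.
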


\begin{proof}
It is a particular case of the converse part of proposition \ref{characterization}.
\end{proof}

\section{Trade orders as admissible supply functions} \label{orders-as-admissible}

Here we construct the admissible supply functions representing trade
orders. For simplicity we will only introduce orders that operate a pair of tokens.
In practice, this is typically the case, however this is not a restriction of our method or a condition in the main theorem \ref{main}.

A limit sell order is a trade order that sells a specified amount of one token if the price is not worse than a specified price. If we assume that the order allows to be filled partially, then
it can be well approximated by a continuous supply function.

\begin{Definition} \label{cont-sell}
 A continuous limit sell order function is a $\R_{\ge 0} \xto g \R_{\ge 0}$ that is continuous, increasing, and such that there are prices $0 < r_1 < r_2$ for which $g(r) = 0$ if $r \le r_1$
 and $g(r) = g(r_2) > 0$ if $r \ge r_2$.
\end{Definition}

Continuous limit sell order functions satisfy the conditions of lemma \ref{construct}.
Given $n$ tokens, and $g$ as in definition \ref{cont-sell}, we can construct a continuous limit sell order between tokens $i$ and $j$ by applying lemmas \ref{construct} and \ref{extend} with $m=2$. To approximate a traditional limit sell order, we have to choose $r_1$ close to $r_2$,
and take any increasing interpolation (e.g. linear interpolation) between $r_1$ and $r_2$.

\begin{Definition} \label{cont-buy}
 A continuous limit buy order function is an $\R_{\ge 0} \xto{\bar h} \R_{\le 0}$ that is continuous, increasing, and such that there are prices $0 < r_1 < r_2$ for which $\bar h(r) = \bar h(r_1) < 0$ if $r \le r_1$ and $\bar h(r) = 0$ if $r \ge r_2$.
\end{Definition}

If $\bar h$ is a continuous limit buy order function, then $g(r) = -\frac{1}{r}\bar h(\frac{1}{r})$ is well defined as $\R_{\ge 0} \xto g \R_{\ge 0}$ with $g(0)=0$
and $\lim_\infty g = 0$, thus satisfying the conditions of lemma \ref{construct}.
The map $\bar h \mapsto g$ is an involution, i.e. it holds
$\bar h(r) = -\frac{1}{r} g(\frac{1}{r})$. For a function $\bar h$ as in definition \ref{cont-buy}
we can construct $g$ and then a continuous limit buy order by applying lemmas \ref{construct} and \ref{extend}.

\section{AMMs and their associated supply functions} \label{ammphis}

Consider an AMM that operates two tokens A and B, at a certain state. We can swap A for B or B for A. If we limit to only swapping A for B,
we can model the AMM as a function $\R_{\ge 0} \xto f \R_{\ge 0}$.
For every incoming amount $x$ of token A, the AMM returns an amount
$f(x)$ of token B. The function $f$ considers the pool fee, so that it reflects the actual behaviour of the AMM. As a simple example, for a constant product AMM without pool fees, at a state such  that the reserves are $a$ and $b$, we have $f(x) = bx / (a + x)$.
The function $f$ has some properties that we collect in the following definition.

\begin{Definition} \label{ammf}
 An AMM function is a map $\R_{\ge 0} \xto f \R_{\ge 0}$ with the following properties:

 (a) it is increasing,

 (b) $f(0) = 0$,

 (c) it is bounded above,

 (d) it is concave. Moreover it is strictly concave on $f^{-1}\big((0, M)\big)$, where
 $M = \sup Im(f)$. Namely, for $x, y \ge 0$ such that $f(x), f(y) < M$, we have
 $$f(tx_1 + (1-t)x_2) > tf(x_1) + (1-t)f(x_2) \ \ \ \forall \ 0 < t < 1$$

 (e) $\partial_+ f(0) < \infty$.
\end{Definition}

Conditions (a), (b) and (c) must hold for any reasonable AMM, while conditions
(d) and (e) hold for almost every AMM currently in use.

In the following proposition we collect the properties of concave functions that
we will need. For most readers, it is recommended to assume that $f$
is differentiable with $f'$ strictly decreasing.
Thus, $\partial_+ f = f'$ and $h_f(r)$
introduced below is the inverse of $f'$
on the region $r \in (0, f'(0)]$ and $h_f(r) = 0$ for $r \ge f'(0)$.
In this way, all the technical details in the proof may be skipped without losing any important point.

In applications, the more general case may appear as piecewise differentiable functions. As a concrete example, for Uniswap v3 pools\cite{Univ3} differentiability breaks at prices where there is no liquidity: the right derivative jumps from the end price of a region with available liquidity to the starting price of the next region with available liquidity or to zero if there is no such next region.

In (d) we ask the concavity to be strict with the exception of the region where $f$ attains the maximum, whenever this region is not empty. The strictness is necessary for the continuity of $h_f$.

\begin{Proposition} \label{concavity-prop}
 Let $\R_{\ge 0} \xto f \R_{\ge 0}$ be an AMM function. Call $M = \sup Im(f)$.
 The following properties hold:

 (I) The right derivative exists at every point $x \ge 0$.
 Moreover, $\R_{\ge 0} \xto{\partial_+ f} \R_{\ge 0}$ is decreasing and right continuous.
 $f$ is continuous.

 (II) $\partial_+ f$ is strictly decreasing in $f^{-1}\big((0, M) \big)$.

 \hspace{0.47cm} $\lim_\infty \partial_+ f = 0$.

 (III) The map $\R_{> 0} \xto{h_f} \R_{\ge 0}, \ h_f(r) = \min \{ x \ge 0 / \partial_+ f(x) \le r \}$ is decreasing and continuous.

 (IV) $h_f(\partial_+f(x)) = x$ for every $x$ such that $f(x) < \sup Im(f)$

 \hspace{0.67cm} $h_f(r) = 0$, for every $r \ge \partial_+ f(0)$.

 (V) $r h_f(r) \le f(h_f(r))$ for every $r > 0$.
\end{Proposition}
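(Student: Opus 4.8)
The plan is to treat parts (I)--(V) as consequences of standard one-variable concave-function theory, isolating the few places where the extra hypotheses (boundedness, strict concavity on $f^{-1}\big((0,M)\big)$, and $\partial_+f(0)<\infty$) actually do work.

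For (I) and (II) I would first recall that for a concave $f$ the difference quotient $h\mapsto \frac{f(x+h)-f(x)}{h}$ is non-increasing in $h>0$, so $\partial_+f(x)=\lim_{h\to 0^+}\frac{f(x+h)-f(x)}{h}$ exists; it is finite on $(0,\infty)$ automatically and finite at $0$ by hypothesis (e). The same monotone-slope picture gives that $\partial_+f$ is decreasing and right continuous, and that $f$ is continuous on $(0,\infty)$; continuity at $0$ follows from $0\le f(h)\le \partial_+f(0)\,h\to 0$. Strict concavity on $f^{-1}\big((0,M)\big)$ upgrades ``decreasing'' to ``strictly decreasing'' there, since a flat stretch of $\partial_+f$ would force $f$ affine on an interval. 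The one genuinely quantitative point is $\lim_\infty\partial_+f=0$: the limit $L\ge 0$ exists (decreasing, nonnegative), and if $L>0$ then $f(2x)-f(x)\ge x\,\partial_+f(2x)\ge Lx\to\infty$, contradicting boundedness of $f$.

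For (III) I would first check that $h_f$ is well defined. Since $\partial_+f$ is decreasing, $\{x:\partial_+f(x)\le r\}$ is an interval unbounded to the right; it is nonempty because $\partial_+f\to 0<r$, and right continuity of $\partial_+f$ shows it contains its infimum, so the minimum exists. Monotonicity of $\partial_+f$ makes $h_f$ decreasing at once. The heart of the proposition---and the step I expect to be the main obstacle---is the continuity of $h_f$. Right continuity is formal: if $h_f(r_0^+)<h_f(r_0)$ then some $x<h_f(r_0)$ satisfies $\partial_+f(x)\le r$ for all $r>r_0$, hence $\partial_+f(x)\le r_0$, contradicting the minimality defining $h_f(r_0)$. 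Left continuity is exactly where strictness is indispensable: an upward jump $h_f(r_0^-)=x_2>x_0=h_f(r_0)$ forces $\partial_+f\equiv r_0$ on $[x_0,x_2)$ (it is $\ge r_0$ for $x<x_2$ by the jump and $\le \partial_+f(x_0)=r_0$ by monotonicity), so $f$ is affine of slope $r_0>0$ on $[x_0,x_2]$; but this interval lies in $f^{-1}\big((0,M)\big)$, contradicting strict concavity there. This is precisely why the remark preceding the proposition flags strictness as necessary.

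Finally, (IV) and (V). For the first identity in (IV), with $x$ such that $f(x)<M$ the derivative $\partial_+f$ is strictly decreasing on $[0,x]$, so $\partial_+f(y)>\partial_+f(x)$ for every $y<x$ while trivially $\partial_+f(x)\le\partial_+f(x)$; hence $x$ is the smallest point where $\partial_+f\le\partial_+f(x)$, that is $h_f(\partial_+f(x))=x$. The case $r\ge\partial_+f(0)$ gives $h_f(r)=0$ immediately, since $\partial_+f(0)\le r$. For (V), set $x=h_f(r)$; minimality gives $\partial_+f(y)>r$ for all $y<x$, so the left derivative satisfies $\partial_-f(x)=\lim_{y\to x^-}\partial_+f(y)\ge r$. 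Evaluating the supporting line of the concave $f$ at $x$ (with slope $\partial_-f(x)$) at the point $0$ and using $f(0)=0$ yields $f(x)\ge x\,\partial_-f(x)\ge rx$, which is the claim; the case $x=0$ is trivial.
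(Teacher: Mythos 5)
Your proof is correct, and most of it runs along the same lines as the paper's own argument: chord--slope facts for (I)--(II), the set $C_r=\{x\ge 0 \,:\, \partial_+ f(x)\le r\}$ for (III)--(IV), and in (V) the supporting line with slope $\partial_- f(h_f(r))\ge r$ evaluated at $0$. The genuine difference is in the continuity of $h_f$, which you rightly flag as the crux. The paper argues via part (IV): given an interval $(x^-,x^+)\ni h_f(r)$, it shows (with a case split on whether $f$ attains $M$) that prices in $(\partial_+ f(x^+),\partial_+ f(x^-))$ are sent by $h_f$ into that interval. You instead use that $h_f$ is monotone, so continuity is equivalent to the absence of jumps: right continuity follows purely from the minimality in the definition of $h_f$ (no concavity needed at all), while a left jump at $r_0$ would force $\partial_+ f\equiv r_0>0$ on $[h_f(r_0),h_f(r_0^-))$, hence $f$ affine there, contradicting strict concavity --- the same mechanism the paper routes through (IV), but applied directly; your version avoids the case split and isolates exactly where strictness enters. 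You also supply two details the paper merely asserts: the proof of $\lim_\infty \partial_+ f=0$ (from $f(2x)\ge f(x)+x\,\partial_+ f(2x)\ge Lx$) and continuity of $f$ at $0$ (from $f(h)\le \partial_+ f(0)\,h$, which is where hypothesis (e) is used). Two negligible patches for your left-continuity step: apply strict concavity to interior points of the affine interval (at its left endpoint one may have $f=0$, and the paper's hypothesis (d) as stated in the ``Namely'' clause in fact covers this), and note that $h_f(r_0^-)=\infty$ is impossible, since $\partial_+ f\equiv r_0$ on an unbounded interval would contradict $\lim_\infty \partial_+ f=0$ or the boundedness of $f$.
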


For the sake of brevity we do not include all the details in the proof.

\begin{proof}

The properties in (I) are standard properties of concave functions. For the convenience of
the reader we include a guide to its proof.

Whenever we have three numbers $0 \le x_1 < x_2 < x_3$, by the concavity condition,
 the point $B = (x_2, f(x_2))$ is not below the line through the points $A = (x_1, f(x_1))$ and $C = (x_3, f(x_3))$. Thus, we have $slope(BC) \le slope(AC) \le slope(AB)$.
 The existence of the lateral derivatives can be proved from this fact.
 From the same fact it follows that $\partial_+ f$ is decreasing.
 $\partial_+ f \ge 0$ because $f$ is increasing. Continuity of $f$ follows from the existence of lateral derivatives.

Right continuity of $\partial_+ f$ can be proved with the help of a statement similar to Lagrange mean value theorem: for $r_1 < r_2$ there is an $r$ in between such that $\partial_+ f(r)$ is greater than or equal to the slope of the line through $(r_1, f(r_1))$ and $(r_2, f(r_2))$.

 (II) Having $\partial_+ f(x_1) = \partial_+ f(x_2)$ for some $x_1 < x_2$ with $f(x_1) < M$ would imply that $\partial_+ f$ is constant in $(x_1, x_2)$. In this case we would have $f'$ equal to a constant on that interval, contradicting strict concavity.
 $\lim_\infty \partial_+ f = 0$ is a consequence of the condition $f$ bounded above.

 (III), (IV) For $r > 0$, write
 $$C_r = \{ x \ge 0 / \partial_+ f(x) \le r \}$$
Since $\partial_+ f$ is decreasing, $C_r$ is an interval.
 We have $C_r \neq \emptyset$ thanks to $\lim_\infty \partial_+ f = 0$. The infimum of $C_r$ is a minimum because $\partial_+ f$ is right continuous.
 We obtain $h_f$ decreasing since $C_{r_1} \subset C_{r_2}$ if $r_1 \le r_2$. Now we can check (IV). Take $x$ such that $f(x) < M$.
 Since $x \in C_{\partial_+ f(x)}$, we have $h_f(\partial_+ f(x)) \le x$. But $x$
 is also a lower bound for $C_{\partial_+ f(x)}$, because of the strict
 decrease of $\partial_+ f$. Thus, $h_f(\partial_+ f(x)) = x$.

 The other assertion in (IV) follows
 from $C_r = \R_{\ge 0}$ for $r \ge \partial_+ f(0)$.

 To show continuity of $h_f$ at $r > 0$, let us divide in two cases.
 First assume $f(x^+) < M$. Consider an arbitrary interval $(x^-, x^+) \ni h_f(r)$. Without loss of generality, $f(x^+) < M$.
 We have $r \in (\partial_+ f(x^+), \partial_+ f(x^-))$ and $h_f\big( (\partial_+ f(x^+), \partial_+ f(x^-)) \big) = (x^-, x^+)$ by the first part of (IV).
 For the case $f(x^+) = M$, if $0 < r' < r$, then $h_f(r')=h_f(r)$, thus $h_f$ is
 left continuous. Right continuity can be handled as in the first case.

 (V) By the left continuity of $\partial_- f$ (which is analogous to right continuity of $\partial_+ f$), it can be seen that $r \le \partial_- f(h_f(r))$.

 Now, the line through the point $(h_f(r), f(h_f(r)))$ with slope $\partial_- f(h_f(r))$ lies above the graph of $f$.
 The equation of this line is $l(x) = \partial_- f(h_f(r)) x - \partial_- f(h_f(r)) h_f(r) + f(h_f(r))$.
 Evaluating at $0$ we have $l(0) \ge 0$, or $f(h_f(r)) \ge \partial_- f(h_f(r)) h_f(r) \ge r h_f(r)$.
\end{proof}

The purpose of the next proposition is to define the function $g_f$ associated to an AMM function $f$
and show that it satisfies the hypothesis of lemma \ref{construct}.
In this way we have an admissible supply function associated to each AMM. In the proof of theorem \ref{main} these supply functions
will play a role of virtual agents in the auction. The choice of $g_f$ below makes $\varphi_1(p)= \frac{p_2}{p_1} h_f(p_2/p_1)$ and
$\varphi_2(p) = -h_f(p_2/p_1)$ when applied to lemma \ref{construct}.

\begin{Proposition} \label{AMMg}
 For any AMM function $f$, define $\R_{\ge 0} \xlongto {g_f} \R_{\ge 0}$,

  $$g_f(r) = \begin{cases}
             \frac{1}{r} h_f(\frac{1}{r}) & \mbox{if } r > 0 \\
             0 & \mbox{if } r = 0
            \end{cases}$$
with $h_f(r)$ as in proposition \ref{concavity-prop}.
\noindent Then the function $g_f$ is continuous and $\lim_{\infty} g_f = 0$.
\end{Proposition}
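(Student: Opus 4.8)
The plan is to verify the two assertions separately: continuity of $g_f$ on $\R_{\ge 0}$, and the vanishing limit at infinity. On the open ray $(0,\infty)$ continuity is immediate from composition, since $r \mapsto 1/r$ is continuous and positive there, $h_f$ is continuous on $\R_{>0}$ by property (III) of proposition \ref{concavity-prop}, and multiplication by the continuous factor $1/r$ preserves continuity. The only delicate point is continuity at $r = 0$, and here I would invoke condition (e) of definition \ref{ammf}, which gives $\partial_+ f(0) < \infty$. By the second assertion of property (IV), $h_f(t) = 0$ for every $t \ge \partial_+ f(0)$; hence for $0 < r \le 1/\partial_+ f(0)$ we have $1/r \ge \partial_+ f(0)$ and therefore $g_f(r) = \frac{1}{r}h_f(1/r) = 0$. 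Thus $g_f$ is identically zero on a neighbourhood of the origin, in particular continuous at $0$ and matching $g_f(0)=0$. (If $\partial_+ f(0) = 0$ then $h_f \equiv 0$, so $g_f \equiv 0$ and the whole statement is trivial.)

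For the limit at infinity I would substitute $s = 1/r$, so that $g_f(r) = s\, h_f(s)$ and the claim $\lim_\infty g_f = 0$ becomes $\lim_{s \to 0^+} s\, h_f(s) = 0$. Since $h_f$ is decreasing, as $s \to 0^+$ the quantity $h_f(s)$ is monotone and either stays bounded or tends to $+\infty$; in the bounded case the product $s\,h_f(s)$ is squeezed to $0$ at once. The substantive case is $h_f(s) \to \infty$. Writing $x = h_f(s)$, the inequality $s \le \partial_- f(h_f(s))$ already obtained inside the proof of property (V) gives $s\,h_f(s) = s\,x \le x\,\partial_- f(x)$, so it suffices to prove $\lim_{x \to \infty} x\,\partial_- f(x) = 0$ for a bounded increasing concave $f$.

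This last limit is the main obstacle, and I would establish it from concavity and boundedness (condition (c)) without appealing to any integral. Fix $a > 0$; for $x > a$ the chord slope of $f$ over $[a,x]$ dominates the left derivative at the right endpoint, giving $f(x) - f(a) \ge \partial_- f(x)\,(x - a)$. Since $f(x) \le M := \sup Im(f)$, this yields $\partial_- f(x)\,(x-a) \le M - f(a)$, hence $x\,\partial_- f(x) \le (M - f(a)) + a\,\partial_- f(x)$. Letting $x \to \infty$ with $a$ fixed, and using $\partial_- f(x) \le \partial_+ f(x-1) \to 0$ (which follows from $\lim_\infty \partial_+ f = 0$, property (II)), I get $\limsup_{x\to\infty} x\,\partial_- f(x) \le M - f(a)$. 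Finally, letting $a \to \infty$ makes $f(a) \to M$, so the $\limsup$ is $\le 0$; as the quantity is nonnegative, $x\,\partial_- f(x) \to 0$. Combined with the inequality of the previous paragraph this closes the unbounded case, giving $\lim_{s\to 0^+} s\,h_f(s) = 0 = \lim_\infty g_f$. I expect the only genuinely nontrivial ingredient to be this two-parameter concavity estimate; everything else reduces to continuity of compositions and the bookkeeping of properties (II)–(IV).
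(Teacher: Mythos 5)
Your proof is correct, and its continuity half coincides with the paper's: composition on $\R_{>0}$, and $g_f \equiv 0$ on a neighbourhood of the origin because $h_f(1/r) = 0$ once $1/r \ge \partial_+ f(0)$ (your aside on the degenerate case $\partial_+ f(0) = 0$ covers a configuration the paper silently excludes when it asserts $0 < \partial_+ f(0)$, which does not follow from definition \ref{ammf}). Where you genuinely diverge is the limit $\lim_\infty g_f = \lim_{s \to 0^+} s\, h_f(s) = 0$. The paper fixes $K < M = \sup Im(f)$, compares a chord line with the tangent line at a large point $x_2$ to get $x_2\, \partial_+ f(x_2) < M - K$, and then takes $r = \partial_+ f(x_2)$, concluding that there exist values of $r$ making $r\, h_f(r)$ arbitrarily small. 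Taken literally, that construction only controls $r$ lying in the image of $\partial_+ f$ at large arguments, i.e.\ it proves $\liminf_{r \to 0} r\, h_f(r) = 0$; when $\partial_+ f$ has jumps --- exactly the non-differentiable AMMs, such as Uniswap v3, that the paper wants to include --- the values of $r$ inside a jump interval $\big(\partial_+ f(x_2), \partial_- f(x_2)\big]$ are never produced, so a small extra step is needed to get the full limit. Your argument closes this: the inequality $s \le \partial_- f(h_f(s))$ gives $s\, h_f(s) \le x\, \partial_- f(x)$ with $x = h_f(s)$ for \emph{every} small $s$, and your two-parameter chord estimate $\partial_- f(x)(x-a) \le M - f(a)$, combined with $\partial_- f(x) \le \partial_+ f(x-1) \to 0$, yields the uniform statement $\lim_{x \to \infty} x\, \partial_- f(x) = 0$; your bounded/unbounded dichotomy for $h_f$ also handles the case where $f$ attains its maximum, in which the paper's choice $r = \partial_+ f(x_2)$ degenerates to $0$ and leaves $h_f$ undefined. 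Both proofs share the same geometric core --- concavity plus boundedness forces ``$x$ times the slope at $x$'' to vanish at infinity --- but your bookkeeping through the left derivative delivers the stated limit for arbitrary AMM functions, whereas the paper's version, as written, requires the reader to interpolate between the exhibited values of $r$.
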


\begin{proof}
  Continuity of $h_f$ implies continuity of $g_f$ on $\R_{>0}$.

  Continuity of $g_f$ at $0$ follows from the fact that
  for $1/r > \partial_+ f(0)$ we have
  $h_f(1/r) = 0$ (recall that $0 < \partial_+ f(0) < \infty$).

  Now we need to show $\lim_{r \to 0} rh_f(r) = 0$.
  Let $M = \sup Im(f)$, and take any $K < M$.
  There is $x_1$ such that $K < f(x_1)$. Let $l_1(x) = tx + K$ be the line such that $l_1(x_1)=f(x_1)$. This line has positive slope.
  For $x_2 > x_1$ large enough we will have $f(x_2) < l_1(x_2)$.
  Now consider the line $l_2(x) = \partial_+ f(x_2)x + K'$ such
  that $l_2(x_2) = f(x_2)$. By the concavity of $f$, this line
  is above the graph of $f$, so $l_2(x_1) \ge f(x_1) = l_1(x_1)$.

  By comparing the lines $l_1$ and $l_2$ at $x_1$ and $x_2$ we conclude
  that $l_2$ must be larger than $l_1$ for arguments lower than $x_1$, in particular $K' = l_2(0) > l_1(0) = K$.

  Since $l_2(x_2) = f(x_2)$, we have $K' = f(x_2) - \partial_+ f(x_2)x_2$

  Therefore, $\partial_+ f(x_2)x_2 = f(x_2) - K' < M - K$.
  By taking $r = \partial_+ f(x_2)$, and using \ref{concavity-prop} (IV) we obtain
  $$r h_f(r) = \partial_+ f(x_2) h_f(\partial_+ f(x_2)) \le \partial_+ f(x_2)x_2 <
  M - K$$
  Thus we have an $r$ such that $r h_f(r)$ is arbitrarily small.
\end{proof}

We next introduce the formal structure for a system of AMMs.
Following our definition of {\it AMM function} we adopt the convention that
AMMs operate only in one direction. In this way, a real life
AMM operating a pair of tokens at a certain state is modelled as two AMMs in our formalism.

\begin{Definition}
 A system of AMMs $\cc C = (C, in, out, (f_c)_{c \in C})$ on $n$ tokens, is a finite set $C$ (the set of AMMs), a pair of functions $C \xto{in, out} I_n$, with $in(c) \neq out(c) \ \forall c$, and
 for each $c \in C$, an AMM function $f_c$ as in definition \ref{ammf}.

\end{Definition}

The applications $in/out$ assign the index of the incoming/outgoing token to each AMM.

\begin{Definition} \label{amm-system}
 Let $\cc C = (C, in, out, (f_c)_{c \in C})$ be a system of AMMs on $n$ tokens.

 (a) We call
 $$h^c = h_{f_c}$$
 $$g^c = g_{f_c}$$
 the functions introduced in propositions \ref{concavity-prop} and \ref{AMMg}
 respectively.

 (b) For any $p \in \cc \PP_n$, we consider the vector
 $$h^{\cc C, p} \in \R_{\ge 0}^C, \ (h^{\cc C, p})_c = h^c(p_{in(c)}/p_{out(c)})$$
 This is the vector of in-amounts needed at each AMM $c$ so that the end price is in accordance with $p$.

 (c) For each $c \in C$ we consider the supply function
 $\varphi^c$
 obtained by applying lemma \ref{construct} to $g^c$ and extending
 according to lemma \ref{extend} with $\sigma(1) = out(c)$ and $\sigma(2) = in(c)$.
 Explicitely:
\[
    \begin{cases}
      (\varphi^c)_{out(c)}(p) = g^c(p_{out(c)}/p_{in(c)})\\
      (\varphi^c)_{in(c)}(p) = -\frac{p_{out(c)}}{p_{in(c)}}g^c(p_{out(c)}/p_{in(c)}) = -h^c(p_{in(c)}/p_{out(c)})\\
      (\varphi^c)_r(p) = 0 & \mbox{if } r \neq in(c), out(c)
    \end{cases}
\]
The function $\varphi^c$ can be interpreted as a virtual agent associated to $c$.
See figure \ref{fig}.

 (d) We associate to the system $\cc C$ the supply function
 $$\varphi^{\cc C} = \sum_{c \in C} \varphi^c$$
\end{Definition}

\section{Uniform price batch auction with AMMs available} \label{walrasian}

Given a set of trade orders and AMMs operating on $n$ tokens, we want to establish the existence
of a price $p \in \PP_n$ such that the prescribed operation on the orders and AMMs at price $p$ can be performed
without providing tokens from any other source.

If the set of trade orders is $O$ we can represent the supply function
for each order simply as an element $\theta \in O$.
In the following theorem we have in mind the case where $\varphi$ has
the form
$$\varphi = \sum_{\theta \in O} \theta$$
Notice that the theorem allows very general trade orders: there are no monotonicity or differentiability conditions and each of them might operate more than two tokens.

For an AMM $c$ operating two tokens, its AMM function is $f_c$. The operation at price $p$ consists in providing the AMM with an amount such that the final price aligns with $p$.
This amount is precisely $h^c(p_{in(c)}/p_{out(c)})$, the minimum amount such that the final price is equal to or less convenient than (from the point of view of the agent that makes the swap) the price determined by $p$.

\begin{Theorem} \label{main}
Let $\PP_n \xto \varphi \R^n$ be a supply function and $\cc C = (C, in, out, (f_c)_{c \in C})$ a system of AMMs on $n$ tokens.
For $p \in \PP_n$ and $(x_c) \in (\R_{\ge 0})^C$, define the surplus of token $i$ as
the function
$\PP_n \times (\R_{\ge 0})^C \xlongto {s_i} \R_{\ge 0}$
$$s_i(p, (x_c)) =
\varphi_i(p) - \sum_{in(a) = i} x_a + \sum_{out(b) = i} f_b(x_b)$$

\noindent (a) Let $v$ be the functional defined by
$v(p,(x_c)) = p \cdot s(p, (x_c))$, which measures the total surplus according
to the price vector $p$.
For any $p \in \PP_n$, the argument that maximizes $v(p,-)$ is
$h^{\cc C, p}$.

\noindent (b) If $\varphi$ is a strict supply function,
there is a $p \in \PP_n$ such that $s_i(p, h^{\cc C, p}) \ge 0$ for every index $i$.

Moreover, there is a $p$ such that the $i$-th surplus function takes the form
$$s_i(p, h^{\cc C, p}) = \sum_{out(c) = i} \Big( f_c
\circ h^c (p_{in(c)} / p_i)
- g^c (p_i / p_{in(c)})
\Big)$$
where each term of the sum is positive or zero.
\end{Theorem}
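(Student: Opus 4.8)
\section*{Proof proposal}

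The plan is to handle parts (a) and (b) separately, with the identity obtained in (a) supplying the backbone for (b). For part (a) I would first expand the functional explicitly. Writing $v(p,(x_c)) = \sum_i p_i\, s_i(p,(x_c))$ and inserting the definition of $s_i$, the contribution $\sum_i p_i \varphi_i(p)$ vanishes by Walras's law (condition (1) in Definition \ref{supply}). Reindexing the two remaining sums by AMM rather than by token — using that each AMM $a$ feeds the coordinate $in(a)$ and each AMM $b$ returns into the coordinate $out(b)$ — collapses the expression into
$$v(p,(x_c)) = \sum_{c \in C} \big( p_{out(c)} f_c(x_c) - p_{in(c)} x_c \big).$$
The structural point I would emphasize is that this sum \emph{decouples}: the $c$-th summand depends on $x_c$ alone, so maximizing $v(p,-)$ reduces to maximizing each summand independently. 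Dividing the $c$-th term by the positive constant $p_{out(c)}$, I must maximize $f_c(x_c) - r\,x_c$ over $x_c \ge 0$, where $r = p_{in(c)}/p_{out(c)}$. Since $f_c$ is concave, this is concave in $x_c$ with right derivative $\partial_+ f_c(x_c) - r$, which is nonnegative precisely while $\partial_+ f_c(x_c) \ge r$; hence the maximizer is the least $x_c$ with $\partial_+ f_c(x_c) \le r$, which is $h^c(p_{in(c)}/p_{out(c)})$ by Proposition \ref{concavity-prop}(III). That is exactly the $c$-component of $h^{\cc C, p}$, giving (a).

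For the existence assertion in (b) I would assemble the global supply function $\Phi = \varphi + \varphi^{\cc C} = \varphi + \sum_c \varphi^c$. By Proposition \ref{linearity} it is a supply function, and it is strict because $\varphi$ is; Theorem \ref{equilibrium} then produces a $p \in \PP_n$ with $\Phi(p) = 0$. Reading off the $i$-th coordinate and substituting the explicit components of $\varphi^c$ from Definition \ref{amm-system}(c) yields
$$\varphi_i(p) = \sum_{in(c)=i} h^c(p_i/p_{out(c)}) - \sum_{out(c)=i} g^c(p_i/p_{in(c)}).$$
The decisive computation is then to feed this into $s_i(p, h^{\cc C, p})$: the in-token terms $\sum_{in(c)=i} h^c(p_i/p_{out(c)})$ arising from $\varphi_i(p)$ cancel exactly against the amounts $-\sum_{in(a)=i} (h^{\cc C,p})_a$ supplied to the AMMs, since those amounts are by definition the same $h^c$ values. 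Only the out-token part survives, producing
$$s_i(p, h^{\cc C, p}) = \sum_{out(c)=i} \big( f_c \circ h^c(p_{in(c)}/p_i) - g^c(p_i/p_{in(c)}) \big),$$
which is the announced formula.

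It remains to show each summand is nonnegative, and this is the one place a genuine inequality enters. Setting $r = p_{in(c)}/p_{out(c)}$, the definition of $g^c$ in Proposition \ref{AMMg} gives $g^c(p_i/p_{in(c)}) = g^c(1/r) = r\,h^c(r)$, so the $c$-th term equals $f_c(h^c(r)) - r\,h^c(r)$, which is nonnegative by Proposition \ref{concavity-prop}(V). Summing over all $c$ with $out(c)=i$ gives $s_i(p,h^{\cc C,p}) \ge 0$ for every $i$, finishing (b). I expect the main obstacle to be bookkeeping rather than analysis: one must reindex the token-indexed and AMM-indexed sums consistently and keep careful track of which ratio appears in each place ($r$ versus $1/r$, equivalently $p_{in}/p_{out}$ versus $p_{out}/p_{in}$) inside $h^c$, $g^c$, and $f_c$. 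Once the in-token contributions are recognized as cancelling, both the closed-form surplus and its positivity fall out, the latter reducing cleanly to the single slippage estimate (V).
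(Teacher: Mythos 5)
Your proposal is correct and takes essentially the same route as the paper: in (a) you decouple $v(p,-)$ into independent one-dimensional concave problems whose first-order conditions give $h^c(p_{in(c)}/p_{out(c)})$, and in (b) you apply Theorem \ref{equilibrium} to $\Phi = \varphi + \varphi^{\cc C}$ (strict by Proposition \ref{linearity}), cancel the in-token terms, and conclude nonnegativity from Proposition \ref{concavity-prop}(V), exactly as the paper does. The only point the paper treats more carefully is uniqueness in (a): your first-order argument shows $h^c(r)$ \emph{is} a maximizer, whereas the paper also uses the strict-concavity clause of Definition \ref{ammf}(d) to show every maximizer equals $h_{f}(r_1/r_2)$ (ruling out a flat stretch of $\gamma$ beyond $h^c(r)$), which is what justifies the phrase ``the argument that maximizes.''
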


\begin{figure}[h]
  \centering
  \includegraphics[scale=0.35]{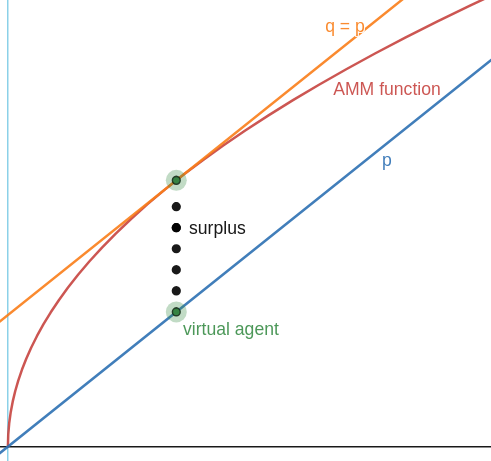}
  \caption{illustration of how a virtual agent is chosen for each AMM in the proof of theorem
  \ref{main}. The vector $p$ determines the slope of the blue line, while $q$ represents
  the final prices of the AMMs. Since $p = q$, the yellow line is parallel to the blue line.}
  \label{fig}
\end{figure}

\begin{proof} $ $

\noindent (a) The contribution of each $c \in C$ to the value of $v$ is
$$- p_{in(c)} x_c + p_{out(c)} f_c(x_c)$$
We can reduce the problem to finding the maximum of $\R_{\ge 0} \xto \gamma \R$,
$\gamma(x) = - r_1 x + r_2 f(x)$ for any
$r_1, r_2 > 0$, $f$ an AMM function.
It can be understood graphically that $x = h_f(r_1/r_2)$ is the maximum. The following is a rigorous proof.
We have $\gamma(0) = 0$, and $\gamma(x) < 0$ for $x$ large enough,
since $f$ is bounded above.
By the continuity of $\gamma$
we conclude that there is a maximum $x_M \ge 0$. It must hold $\partial_+ \gamma(x_M) \le 0$. Therefore,
$$\partial_+ f(x_M) \le \frac{r_1}{r_2}$$
If $x_M > 0$ it must also hold
$\partial_- \gamma(x_M) \ge 0$, so that $\partial_- f(x_M) \ge r_1/r_2 > 0$.
In particular, $x_M$ is in the region of strict concavity of $f$.
Thus, for any $x < x_M$ we have
$$\partial_+ f(x) > \partial_- f(x_M) \ge \frac{r_1}{r_2}$$
showing that $x_M = \min C_{r_1/r_2} = h_f(r_1/r_2)$.
If $x_M = 0$, then it also holds $x_M = \min C_{r_1/r_2}$.

\noindent (b) Let $\varphi^{\cc C}$ be the supply function associated
to the system $\cc C$ according to definition \ref{amm-system}.
The function $\Phi = \varphi + \varphi^{\cc C}$ is a strict admissible supply function by proposition \ref{linearity}.
Applying theorem \ref{equilibrium}, there is a $p \in \PP_n$ such that $\Phi(p) = 0$. We can compute the $i$-th surplus function at $(p, h^{\cc C,p})$.

We alleviate notation with the symbol $r^c_p = \frac{p_{in(c)}}{p_{out(c)}}$ for any $c \in C$ and $p \in \PP_n$.

$$s_i(p, h^{\cc C,p}) = \varphi_i(p)
- \sum_{in(a) = i} h^a(r^a_p)
+ \sum_{out(b) = i} f_b \circ h^b(r^b_p)$$

We also have, according to definition \ref{amm-system} (c) and (d)
$$\Phi_i(p) = \varphi_i(p)
- \sum_{in(a) = i} 1/r^a_p \ g^a(1/r^a_p)
+ \sum_{out(b) = i} g^b(1/r^b_p) = $$
$$ = \varphi_i(p)
- \sum_{in(a) = i} h^a(r^a_p)
+ \sum_{out(b) = i} r^b_p h^b(r^b_p)
$$

Thus
$$s_i(p, h^{\cc C,p}) = \Phi_i(p)
+ \sum_{out(b) = i} \Big( f_b \circ h^b(r^b_p) - r^b_p h^b(r^b_p)\Big)=$$
$$= \sum_{out(b) = i} \Big( f_b \circ h^b(r^b_p) - r^b_p h^b(r^b_p) \Big) \ge 0$$
where each term is positive or zero by proposition \ref{concavity-prop} (V).
The number $r^b_p h^b(r^b_p)$ can also be written as
$g^b(1/r^b_p) = g^b(p_i/p_{in(b)})$.
\end{proof}

We are particularly interested in the case where $$\varphi = \sum_{\theta \in O} \theta$$
for a set $O$ whose elements are continuous limit sell orders and continuous limit buy orders. The following corollary provides a sufficient condition to apply theorem \ref{main} to such $\varphi$.

\begin{Corollary} [Walraswap] \label{coro}
Let $O$ be a set formed by continuous limit sell orders and continuous limit
buy orders in the sense of the prescriptions after definitions \ref{cont-sell} and \ref{cont-buy}. If for every index $i$ the set $O$
contains a sell order that buys token $i$, then
$\varphi = \sum_{\theta \in O} \theta$ is a strict supply function so that
theorem \ref{main} can be applied to $\varphi$, with any system
of AMMs on $n$ tokens.
\end{Corollary}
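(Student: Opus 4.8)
The plan is to split the statement into two independent claims: that each individual order contributes an admissible summand, and that the resulting sum $\varphi$ is \emph{strict}. The first claim requires essentially no new work. The discussion following Definition \ref{cont-sell} already records that every continuous limit sell order function satisfies the hypotheses of Lemma \ref{construct}, and the discussion after Definition \ref{cont-buy} does the same for the function $g$ associated to a continuous limit buy order. Hence each $\theta \in O$ is, by Lemmas \ref{construct} and \ref{extend}, an admissible supply function supported on the pair of tokens it operates, and Proposition \ref{linearity} then shows that $\varphi = \sum_{\theta \in O}\theta$ is admissible. The whole content of the corollary thus lies in proving strictness of $\varphi$.

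For strictness I would fix an index $i$ and an arbitrary boundary vector $p$ with $p_i = 0$, and aim to show $\psi_i(p) < 0$. Since $\psi_i = \sum_{\theta}\psi_i^{\theta}$ and Proposition \ref{neg-psi} gives $\psi_i^{\theta}(p)\le 0$ for every summand, it suffices to produce a single $\theta_0$ with $\psi_i^{\theta_0}(p) < 0$. The natural candidate is the order supplied by the hypothesis: a sell order $\theta_0$ that buys token $i$ while selling some token $j_0$. Writing its two-token datum through Proposition \ref{characterization} as a function $g_0$ of the ratio $p_{j_0}/p_i$, the value-supply of the bought token is $\psi_i^{\theta_0}(p) = -\,p_{j_0}\,g_0(p_{j_0}/p_i)$. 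Because $\theta_0$ is a \emph{sell} order (and not a buy order nor an AMM summand, for which the relevant limit at infinity vanishes), one has $\lim_{\infty} g_0 = g_0(r_2) > 0$; passing to the boundary with $p_i \to 0$ and $p_{j_0}$ fixed then gives $\psi_i^{\theta_0}(p) = -\,p_{j_0}\,g_0(r_2)$.

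This computation isolates the decisive point: the contribution is strictly negative exactly when $p_{j_0} > 0$. The step I expect to be the main obstacle is therefore guaranteeing that the companion \emph{sold} token $j_0$ is not itself on the boundary. For $n = 2$ there is nothing to check, since the one coordinate other than $p_i$ is automatically positive, and the corollary follows at once. For $n \ge 3$, however, the zero set $\{\,j : p_j = 0\,\}$ may contain $j_0$, in which case $\psi_i^{\theta_0}(p)$ degenerates to $0$ and the candidate gives no strict gain. A three-token cyclic family of sell orders (buying $1$ from $2$, $2$ from $3$, and $3$ from $1$), evaluated at $p = (0,0,1)$, shows that \emph{every} summand touching token $1$ can contribute $0$ to $\psi_1$ simultaneously, so that $\psi_1(p) = 0$ and strictness in the literal sense fails.

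To close the argument at such degenerate boundary vectors I would proceed along one of two lines, and I regard settling this as the crux of the proof. Either one strengthens the hypothesis to a connectivity condition on the directed graph of sell orders—namely that for every proper nonempty set $J$ and every $i \in J$ some sell order buys $i$ while selling a token outside $J$, which is precisely what routes value to a positive-price coordinate and restores $\psi_i^{\theta_0}(p) < 0$—or one weakens what must be established, exploiting that only the fixed-point argument behind Theorem \ref{main} is actually needed: it is enough that no boundary vector be a root of $\Phi = \varphi + \varphi^{\cc C}$, and by Walras's law the value deficits on the zero set must be balanced by strictly positive value-supply elsewhere, so such a vector cannot be an equilibrium even when it escapes the strictness test. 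I would present the clean $n = 2$ case and the positive-companion case first, and devote the main effort to this boundary-degeneracy analysis.
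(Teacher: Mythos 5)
Your main-line argument is, step for step, the paper's own proof of this corollary: given $p$ with $p_i = 0$, take a sell order $\theta_0$ buying token $i$ and selling token $j_0$, compute its contribution $\psi_i^{\theta_0}(p) = -\lim_{r\to 0} p_{j_0}\, g_0(p_{j_0}/r)$, declare it strictly negative, and absorb all other summands with Proposition \ref{neg-psi}. The one difference is that the paper asserts the strict inequality unconditionally, whereas you observed that it holds only when $p_{j_0} > 0$; this is precisely the point on which the published proof is silent, since for $n \ge 3$ nothing in the hypothesis prevents the sold token from lying in the zero set of $p$ as well. Moreover, your three-token cyclic example is a genuine counterexample to the statement as written: at $p = (0,0,1)$ the order selling $2$ for $1$ degenerates at the corner $p_1 = p_2 = 0$ (its $\psi_1 = -p_2 g(p_2/p_1)$ is bounded by $p_2 \sup g \to 0$), the order selling $1$ for $3$ contributes $p_1 g(p_1/p_3) = 0$, and the order supported on $\{2,3\}$ contributes nothing to $\psi_1$; hence $\psi_1(p) = 0$ and $\varphi$ is not strict, although every token is bought by some sell order. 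So what you flagged as the crux is not a defect of your write-up: it is a real gap in the paper, and the corollary's strictness claim is literally correct only for $n = 2$ or under a stronger hypothesis.

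Concerning your two proposed repairs: (a) is sound --- taking $J$ to be the zero set of $p$, the hypothesized order gives $\psi_i^{\theta_0}(p) = -p_{j_0}\lim_\infty g_0 < 0$ and Proposition \ref{neg-psi} finishes --- but note that quantifying over all proper nonempty $J$ (in particular $J = I_n\setminus\{k\}$) makes your condition equivalent to demanding a sell order for every ordered pair of tokens. Repair (b), however, does not work: the stated hypothesis does not exclude boundary roots. Take $n = 4$, an empty AMM system, and sell orders forming two disjoint $2$-cycles, on $\{1,2\}$ and on $\{3,4\}$. At $p = (0,0,p_3,p_4)$ with the ratio $p_3/p_4$ chosen by the intermediate value theorem so that the $\{3,4\}$ pair clears, every coordinate of $\psi$ vanishes: the orders inside $\{1,2\}$ degenerate at the corner and the $\{3,4\}$ orders are at equilibrium. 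Walras's law holds with all ``deficits'' identically zero, so the balancing argument you invoke gives nothing, and $\Phi = \varphi$ has a boundary root that the Brouwer fixed point could land on. (An interior equilibrium still exists in both examples --- via Theorem \ref{equilibrium}'s argument in the cyclic case, since there no boundary roots occur, and via a splitting as in Proposition \ref{factorize} in the $2$-cycle case --- but any general rescue of the corollary's conclusion under its stated hypothesis needs an argument of that kind, not strictness.)
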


\begin{proof}
At this point, we only need to show that $\varphi$ is strict.
Take any $p \in (\R_{\ge 0})^n \setminus 0$ with $p_i = 0$.
Let $\theta \in O$ be a continuous
limit sell order from token $j$ to token $i$.
Let $\psi^\theta$ be the $\psi$ function associated to the supply function $\theta$,
and $g$ the continuous limit sell order function that defines $\theta$.
By the computation of $\psi$ in the proof of proposition \ref{characterization}, we have
\begin{equation} \label{strict}
\psi^\theta_i(p) = -\psi^\theta_j(p) = -\lim_{r \to 0} p_j g(p_j/r) < 0
\end{equation}
The inequality holds due to the form of $g$ from definition \ref{cont-sell}. By proposition \ref{neg-psi}, this implies $\psi_i(p) < 0$
\end{proof}

The equation (\ref{strict}) shows that sell orders contribute to the strictness of $\varphi$ in corollary \ref{coro} because their corresponding function $g$ from lemma \ref{construct} satisfies $\lim_{\infty} g > 0$.
On the other hand, for buy orders and supply functions coming from AMMs
we have $\lim_{\infty} g = 0$ (see proposition \ref{AMMg}).
The necessity of the strictness condition is highlighted by the example
at remark 2 after theorem \ref{equilibrium}. It is nevertheless possible
that a system that does not have enough sell orders as in \ref{coro},
presents equilibrium points on $\cc P_n$. Consider for example the case $n = 2$ with two opposed continuous limit buy orders that can be matched at a certain price.

\section{On the computation of the proposed solution} \label{computation}

In blockchain applications, a system running batch auctions
will typically perform executions periodically, possibly one per
minute or a frequency in that scale.
From one execution to the other, the prices are not expected to change too much.
This means that we can assume to have a very good {\it a priori} approximation, which is the latest price vector.

For a general continuous map from the $n$-dimensional simplex to itself, Scarf developed
an algorithm for computing a fixed point \cite{Scarf}. It is not obvious how to apply
this algorithm to take advantage from our starting approximation. This is precisely the problem solved by Tuy, Thoai and Muu in \cite{Tuy}. We leave as an open problem the investigation of the different scenarios where this method allows to compute good enough approximations in the required amount of time.

Another way to further reduce the running time of the algorithm in practice is to lower the dimension $n$ by restricting the topology of the exchange graph, so that it is possible to divide
and parallelize the problem.

The auctioneer system has to decide which tokens it will allow in the auction. Among those
it can also decide which pairs of tokens are allowed to operate directly, through trade orders or AMMs. In a public blockchain there typically is a main token. For many secondary tokens, the auctioneer may decide to include them but only allowing them to operate directly against
the main token. By doing so, the problem can be parallelized. For each of these secondary tokens,
we have a bidimensional system that can be solved through bisection. More generally, the auctioneer can decide to include clusters of tokens connected to the main token, but with
no direct connection with any other token of the system.
In this case, we can still divide the problem and parallelize the computation.
The following proposition gives the underlying general statement for these type of situations.

\begin{Proposition} \label{factorize}
 Let $\PP_n \xto \varphi \R^n$ be an admissible supply function,
 $I, J \subset I_n$ with $I \cap J = \{k\}$.
 Assume $\varphi = \varphi_I + \varphi_J$ with $\varphi_I$ and $\varphi_J$
 admissible functions supported at $I$ and $J$ respectively.
 If $\varphi(p) = 0$, we have $\varphi_I(p) = 0$ and $\varphi_J(p) = 0$.
\end{Proposition}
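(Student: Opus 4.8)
The plan is to reduce the whole statement to the single shared index $k$, and then close the argument by applying Walras's law to $\varphi_I$ and $\varphi_J$ \emph{individually}.

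First I would invoke Proposition \ref{support}. Since $\varphi_I$ is supported at $I$, it gives $(\varphi_I)_i(p) = 0$ for every $i \notin I$, and likewise $(\varphi_J)_j(p) = 0$ for every $j \notin J$. In particular, $\varphi_I(p)$ is supported on the coordinates $I$ and $\varphi_J(p)$ on the coordinates $J$.

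Next I would clear all coordinates other than $k$. For an index $i \in I \setminus \{k\}$ we have $i \notin J$, so $(\varphi_J)_i(p) = 0$, and the hypothesis $\varphi(p) = 0$ yields $(\varphi_I)_i(p) = \varphi_i(p) - (\varphi_J)_i(p) = 0$. Symmetrically, $(\varphi_J)_j(p) = 0$ for every $j \in J \setminus \{k\}$. Combined with the previous paragraph, this shows that the \emph{only} coordinate of $\varphi_I(p)$ that could be nonzero is the $k$-th one, and the same holds for $\varphi_J(p)$.

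The final step — the only one needing a genuine idea — is the shared coordinate $k$. Here I would apply condition (1) of Definition \ref{supply}, Walras's law, to $\varphi_I$ alone: $\varphi_I(p) \cdot p = 0$. Since every coordinate of $\varphi_I(p)$ other than the $k$-th vanishes, this collapses to $(\varphi_I)_k(p)\, p_k = 0$. As $p \in \PP_n = (\R_{>0})^n$ we have $p_k > 0$, forcing $(\varphi_I)_k(p) = 0$; running the identical argument for $\varphi_J$ gives $(\varphi_J)_k(p) = 0$. Together with the coordinate-clearing step, this establishes $\varphi_I(p) = \varphi_J(p) = 0$. The main obstacle is simply recognizing that the overlap $\{k\}$ cannot be resolved from $\varphi(p)=0$ alone — one must use that each summand separately satisfies Walras's law, after which the strict positivity of $p_k$ finishes immediately.
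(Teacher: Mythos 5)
Your proof is correct and follows essentially the same route as the paper: clear all coordinates outside the overlap using Proposition \ref{support} together with $\varphi(p)=0$, then apply Walras's law to each summand separately so that it collapses to $p_k(\varphi_I)_k(p)=0$, and conclude from $p_k>0$. The paper's own proof is a more terse rendering of exactly this argument.
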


\begin{proof}
 Let $p$ be such that $\varphi(p) = 0$. By proposition \ref{support}, for any
 $i \in I \setminus J$ it holds $(\varphi_J)_i(p) = 0$. Thus
 $(\varphi_I)_i(p) = 0$. Analogously, for $j \in J \setminus I$ it holds
 $(\varphi_J)_j(p) = 0$.

 Since $\varphi_I$ is admissible,
 $\sum_{i \in I} p_i (\varphi_I)_i(p) = 0$, whence $p_k (\varphi_I)_k(p) = 0$,
 and therefore $(\varphi_I)_k(p) = 0$. Analogously $(\varphi_J)_k(p) = 0$.
\end{proof}

\section{Comparison with recent work} \label{comparison}

In ``Augmenting Batch Exchanges with Constant Function Market
Makers'' \cite{Ramseyer5}, the authors study batch auctions including AMMs,
so that the object of study is exactly the same as here. They state several properties that a desired solution may have and propose possible solutions to achieve some of these properties. They also provide a convex program to compute these equilibrium points, applying a fixed point method.

We will now focus on the differences between \cite{Ramseyer5} and our work.
Probably the most important difference in the premises is that they treat AMMs as agents participating in the auction, so that AMM swaps will receive the same effective price as the trade orders. This assumption leads to their theorem 1.6 which states that this treatment for AMMs is incompatible with the {\it price coherence} assumption, i.e. the condition that there is a price vector that determines the final prices of AMMs. On the other hand, our point of view is that AMMs are not agents and there is no apparent reason why the swaps should receive the same effective price as agents\footnote{We associate an agent to each AMM, but this should be seen as a technical device for the proof of theorem \ref{main}.}. This allows us to fulfill price coherence, which is desirable as we explained in the introduction.
In \cite{Ramseyer5} this possibility is considered in section 4 by allowing a post-process after the batch trade. Allowing a post-process means that the original ``same price'' assumption is disregarded, since the end result is equivalent to a single execution that violates it. It is worth to point out that the price vector corresponding to the final state of the AMMs produced by the post-processing step will be typically different from the original execution price vector.

The convex method proposed in section 5 of ``Augmenting Batch Exchanges...'' is not likely to be applicable to compute the solution expressed by theorem \ref{main} (b). This is because such  solution is equivalent to a zero of the admissible supply function defined in the proof. There, the supply functions associated to AMMs do not satisfy {\it assumption 2} from \cite{Ramseyer5}. Most notably, they are not monotonic and the limit at infinity is zero.

\bigskip

{\bf Acknowledgements: } thanks to Christoph Schlegel from Flashbots for informing us about the paper \cite{Ramseyer5} and the use of fixed-point methods to obtain equilibrium points in
mathematical economics.

\bibliographystyle{amsplain}
\bibliography{mev}

\end{document}